\documentclass[preprintnumbers, floatfix, preprintnumbers, letterpaper, nofootinbib, twocolumn]{revtex4}
\pdfoutput=1
\usepackage{graphicx}
\usepackage{microtype}
\usepackage{amsmath,amsthm}
\usepackage{amssymb}
\usepackage{subfigure}
\usepackage{hyperref}
\usepackage{url}
\usepackage{xcolor}
\usepackage{color}
\usepackage{mathrsfs}
\usepackage{calrsfs}
\usepackage{amsfonts}
\usepackage{latexsym}
\usepackage{ragged2e}
\usepackage{epsfig}
\usepackage{textcomp}
\usepackage{phaistos}
\usepackage{lipsum}

\makeatletter
\renewcommand\@makefnmark{\hbox{\@textsuperscript{\normalfont\color{purple}\@thefnmark}}}
\renewcommand\@makefntext[1]{%
  \parindent 1em\noindent
            \hb@xt@1.8em{%
                \hss\@textsuperscript{\normalfont\@thefnmark}}#1}
\makeatother

\usepackage{caption}
\DeclareCaptionJustification{justified}{\leftskip=0pt \rightskip=0pt \parfillskip=0pt plus 1fil}
\definecolor{vividviolet}{rgb}{0.62, 0.0, 1.0}
\definecolor{amaranth}{rgb}{0.9, 0.17, 0.31}
\definecolor{palatinateblue}{rgb}{0.15, 0.23, 0.89}
\definecolor{brightpink}{rgb}{1.0, 0.0, 0.5}
\definecolor{cornflowerblue}{rgb}{0.39, 0.58, 0.93}
\definecolor{deepcarminepink}{rgb}{0.94, 0.19, 0.22}
\definecolor{radicalred}{rgb}{1.0, 0.21, 0.37}

\hypersetup{ linktoc=all,
    colorlinks, linkcolor={palatinateblue},
    citecolor={brightpink}, urlcolor={black}
}

\graphicspath{{Images/}}

\graphicspath{{Images/}}



\def\sideremark#1{\ifvmode\leavevmode\fi\vadjust{\vbox to0pt{\vss
 \hbox to 0pt{\hskip\hsize\hskip1em
 \vbox{\hsize1.5cm\tiny\raggedright\pretolerance10000
 \noindent #1\hfill}\hss}\vbox to8pt{\vfil}\vss}}}%
\begin{document}

\title{A Generalized Landauer's Principle for Unitarily Transformed Thermal Reservoirs}
\author{Hao Xu}
\thanks{Corresponding author}
\email{haoxu@yzu.edu.cn}
\affiliation{Center for Gravitation and Cosmology, College of Physical Science and Technology, Yangzhou University, \\180 Siwangting Road, Yangzhou City, Jiangsu Province 225002, China}

\begin{abstract}
Landauer's principle, a cornerstone of quantum information and thermodynamics, appears to be violated when the thermal reservoir is replaced by a squeezed thermal state (STS), owing to the additional thermodynamic resources inherently present in the squeezed state. We introduce a formal extension of the principle to such unitarily transformed thermal states. By defining an effective Hamiltonian, we rigorously establish a generalized Landauer inequality, which naturally reduces to the standard case for an ordinary thermal reservoir as a special instance. The framework further yields a consistent definition of entropy production and a proof of its non-negativity. We illustrate its utility by studying an arbitrarily moving Unruh-DeWitt detector coupled to a quantum field initially prepared in the STS. Using perturbation theory, we compute the entropy production explicitly, confirming its positivity. As a result of the symmetry breaking induced by the unitary transformation, it depends on both the proper time interval and the absolute spacetime position. Our work resolves the apparent violation of Landauer's principle with STSs. It also provides a robust tool for analyzing quantum thermodynamics in non-equilibrium and relativistic settings, with potential implications for quantum thermal machines and information protocols.
\end{abstract}

\maketitle
The squeezed thermal state (STS) constitutes a fundamental concept in modern quantum thermodynamics and metrology. It represents a non-equilibrium system formed by applying a unitary squeezing operator to a thermal density matrix, resulting in phase-sensitive noise redistribution—fluctuations are suppressed in one quadrature below the standard quantum limit at the expense of amplified noise in the conjugate quadrature \cite{Marian:1993aa,Kim:1989bb}. By extending the notion of coherent thermal states through squeezing transformations, STSs enable enhanced measurement precision and exhibit distinctive thermodynamic properties. Notably, when employed as an engineered thermal reservoir in quantum heat engines, STSs can facilitate performance that surpasses the classical Carnot limit, with potential applications extending to quantum-enhanced thermal management and information processing \cite{Rossnagel:2014cc,Klaers:2017dd}. Research on STSs thus provides critical insights into the interplay between quantum squeezing and thermal fluctuations, offering a productive framework for exploring the interface of quantum optics, statistical mechanics, and open quantum systems \cite{Breuer}.

Given the advantages of STSs as thermal reservoirs, researchers are actively exploring their role in energy and information exchange between quantum systems. A key principle in this domain is Landauer's principle, which originally stated that erasing one bit of information requires a minimum energy cost of $k_BT\ln2$ \cite{landauer1961}. However, in a seminal thought experiment presented in \cite{Klaers:2019,Chattopadhyay:2025vis}, Klaers showed that when a STS is employed as the thermal bath for information erasure, the conventional form of Landauer's principle can be violated, and the associated work cost falls below the standard Landauer limit. This is because the STS inherently contains additional thermodynamic resources, rather than violating thermodynamic laws. This work highlights the potential of squeezed thermal environments for pioneering low-energy quantum information processing and advancing the design of energy-efficient electronics.

On the other hand, the apparent violation of Landauer's principle when using a STS for information erasure is consistent with its mathematical formulation, which is derived from a system-reservoir interaction model \cite{Reeb2013}. The derivation rests on four key assumptions: (i) both the system $S$ and the reservoir $R$ are described by Hilbert spaces, (ii) $R$ is initially in a thermal state with inverse temperature $\beta$, (iii) $S$ and $R$ are initially uncorrelated, and (iv) the process proceeds via unitary evolution. If all four assumptions are satisfied, then Landauer's principle can be expressed as 
\begin{equation}
\beta \Delta Q \geqslant \Delta S.
\label{bound}
\end{equation}
Here, $\Delta Q := \text{tr}\left[\hat{H}_R(\rho'_R - \rho_R) \right]$ is the heat transferred to the reservoir $R$, where $\hat{H}_R$ is the Hamiltonian of $R$, while $\rho'_R$ and $\rho_R$ denote the final and initial states of $R$, respectively. Moreover, $\Delta S := S(\rho_S) - S(\rho'_S)$ is the von Neumann entropy change between the initial state $\rho_S$ and the final state $\rho'_S$ of the system $S$.

The violation observed in the squeezed thermal memory scenario arises precisely because assumption (ii) is not met: a STS is a nonequilibrium reservoir, not a standard Gibbs state. Therefore, the thought experimental violation does not contradict the rigorously derived principle but rather confirms its sensitivity to the nature of the thermal environment.

Landauer's principle derives its significance from its universality: provided that the four assumptions are satisfied, the inequality \eqref{bound} holds universally, regardless of the specific system or interaction type. Additionally, as the von Neumann entropy remains invariant under unitary evolution, it cannot serve as an indicator of whether a system's evolution is reversible. In contrast, Landauer's principle provides a rigorous framework precisely for defining and calculating entropy production \cite{Batalhao2015,Landi:2020bsq}, which is the key quantity that measures the irreversibility of quantum interactions. The discrepancy between the two sides of equation \eqref{bound} exactly equals the entropy production, which can be decomposed into the mutual information between $S$ and $R$ and the relative entropy between the initial and final states of $R$. This observation is also directly related to the fluctuation theorem \cite{Esposito2009}. Landauer's principle converts quantum correlations in a bipartite system into computable physical quantities associated with each subsystem. This decomposition emphasises how the principle bridges information theory and thermodynamics, providing a clear metric for energy dissipation during processes like information erasure.

However, once Landauer's principle is violated, this universality is lost. In such scenarios, the calculation and even the definition of entropy production become considerably challenging. This difficulty arises from the fact that different quantum systems may possess vastly different Hamiltonians and energy spectra, making the analysis of their joint Hilbert space after interaction highly complex. Evaluation becomes inherently complex due to the need to account for non-equilibrium dynamics, reservoir engineering, and emergent correlations. Such computations are only tractable in certain special cases, such as within the framework of Gaussian quantum mechanics \cite{Adesso2014,Ptaszynski2019,Xu:2021ihm}. This underscores the principle's role as a benchmark: deviations from it reveal fundamental limits of thermodynamic laws in quantum regimes.

The four assumptions introduced in \cite{Reeb2013} constrain the form of the system–environment interaction. It is widely accepted that these assumptions constitute a general and minimal set of conditions. A natural direction for subsequent theoretical work, therefore, is to examine the specific physical role of these assumptions and explore whether modifying or extending them can lead to richer physical insights. Research in this direction has generally followed two main approaches. The first one maintains the original four assumptions, and proceeds by either imposing stricter constraints (e.g., introducing a fifth assumption) to derive tighter bounds, or by examining the behavior of specific physical quantities within particular models\cite{Goold2015,Lorenzo:2015fgu,Strasberg:2017klo,Guarnieri:2017tan,Campbell:2017pjc,Timpanaro2020,Vu2022,Rolandi:2022ose}. The second approach relaxes the original assumptions to derive perturbative corrections to the inequality \cite{Iyoda2017,Xu:2024xlx}. Our work diverges from both approaches. This letter provides a \emph{natural and analytic} extension of the original framework and demonstrates Landauer's principle can be formally extended to the case where the initial thermal reservoir state is replaced by a unitarily transformed thermal state $\hat O \rho_{\mathrm{th}} \hat O^\dagger$, requiring neither complex functions nor perturbative approximations. Here $\hat O$ is a unitary operator and $\rho_{\mathrm{th}}$ denotes the thermal state---a class that includes, for example, the STS. The modified principle takes the form of an inequality, obtained by replacing the original reservoir Hamiltonian $H_R$ with an effective Hamiltonian $\hat{H}_{\mathrm{eff}} = \hat O \hat{H}_R \hat O^\dagger$. For a system $S$ interacting with such a generalized reservoir, the entropy production can be identified directly as the difference between the two sides of the revised inequality, which greatly simplifies its evaluation. 

Moreover, we illustrate this framework with a spin-boson model, analyzing the entropy production for an arbitrarily moving Unruh--DeWitt detector coupled to a quantum field theory (QFT) prepared in a STS. The Unruh-DeWitt detector was originally introduced to study the Unruh effect, whereby a uniformly accelerated observer in vacuum detects thermal radiation proportional to its acceleration \cite{Unruh:1976db,Unruh:1983ms,DeWitt1979,Ben-Benjamin:2019opz}. Moreover, the Unruh-DeWitt detector model provides a simplified yet powerful framework for studying light-matter interactions in relativistic settings. Through the equivalence principle, the Unruh-DeWitt detector also serves as a vital tool for studying Hawking radiation, enabling precise and observable analyses of different spacetime backgrounds, detector trajectories, and energy-transfer mechanisms. Consequently, it has become one of the most versatile models for investigating processes of energy and information transfer, offering potential insights into open questions such as the black hole information loss paradox \cite{Mathur:2009hf}. We compute the time evolution of the system, and the entropy production is evaluated with the formula provided in Theorem 1 below, and its positivity is explicitly verified. The analytical approach and conclusions presented here can be naturally extended to other models.

\newtheorem{theorem}{Theorem}
\begin{theorem}
Let $\rho_{SR} = \rho_S \otimes \rho_R$ be a product state on a bipartite system $SR$, 
where $\rho_R = \hat O \frac{\mathrm e^{-\beta \hat{H}_R}}{\operatorname{tr}[\mathrm e^{-\beta \hat{H}_R}]} \hat O^\dagger$ 
is obtained from the thermal state at inverse temperature $\beta$ by applying a unitary operator $\hat O$. 
Let $\hat{U}$ be a global unitary, and denote the evolved state by $\rho'_{SR} = \hat{U}(\rho_S \otimes \rho_R)\hat{U}^\dagger$, 
with reduced states $\rho'_S$ and $\rho'_R$. Then
\begin{align*}
\beta \Delta \hat{H}_{\text{eff}}&:=\operatorname{tr}\bigl[\hat{H}_{\text{eff}}(\rho'_R-\rho_R)\bigr] \nonumber \\
 &= \Delta S + I(S':R') + D(\rho'_R \|\rho_R),
\end{align*}
where $\hat{H}_{\mathrm{eff}}:= \hat O \hat{H}_{R} \hat O^{\dagger}$, $\Delta S:=S(\rho_S)-S(\rho'_S)$, and $I(S':R')$ and $D(\rho'_R \|\rho_R)$ denote the mutual information and relative entropy, respectively. 
Consequently, Landauer's inequality holds:
\begin{align*}
\beta \Delta \hat{H}_{\text{eff}} \geqslant \Delta S .
\end{align*}
\end{theorem}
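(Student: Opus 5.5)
The plan is to follow the Reeb--Wolf derivation of the equality version of Landauer's principle, with the effective Hamiltonian in place of $\hat H_R$. The key observation is that $\rho_R$ is itself a Gibbs state of $\hat{H}_{\mathrm{eff}}$. Since $\hat O$ is unitary, $\hat O\, \mathrm e^{-\beta \hat H_R}\hat O^\dagger = \mathrm e^{-\beta \hat O \hat H_R \hat O^\dagger} = \mathrm e^{-\beta \hat H_{\mathrm{eff}}}$, using functional calculus, i.e.\ $\hat O f(\hat H_R)\hat O^\dagger = f(\hat O\hat H_R\hat O^\dagger)$. Likewise $\operatorname{tr}[\mathrm e^{-\beta\hat H_R}]=\operatorname{tr}[\mathrm e^{-\beta\hat H_{\mathrm{eff}}}]=:Z$ by cyclicity of the trace. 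Hence $\rho_R=\mathrm e^{-\beta\hat H_{\mathrm{eff}}}/Z$, and
\begin{equation*}
\ln\rho_R=-\beta\hat H_{\mathrm{eff}}-\ln Z\,\mathbb 1 .
\end{equation*}
This identity is the only place the unitary transformation enters.

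With this in hand, I would expand the relative entropy:
\begin{equation*}
D(\rho'_R\|\rho_R)=-S(\rho'_R)-\operatorname{tr}[\rho'_R\ln\rho_R]=-S(\rho'_R)+\beta\operatorname{tr}[\hat H_{\mathrm{eff}}\rho'_R]+\ln Z .
\end{equation*}
In the same way, $S(\rho_R)=\beta\operatorname{tr}[\hat H_{\mathrm{eff}}\rho_R]+\ln Z$. Subtracting the two gives
\begin{equation*}
\beta\operatorname{tr}[\hat H_{\mathrm{eff}}(\rho'_R-\rho_R)]=S(\rho'_R)-S(\rho_R)+D(\rho'_R\|\rho_R).
\end{equation*}

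Next I would bring in the entropy bookkeeping. The global evolution is unitary and the initial state is a product, so
\begin{equation*}
S(\rho'_{SR})=S(\rho_S\otimes\rho_R)=S(\rho_S)+S(\rho_R).
\end{equation*}
Writing $I(S':R')=S(\rho'_S)+S(\rho'_R)-S(\rho'_{SR})$ then gives
\begin{equation*}
S(\rho'_R)-S(\rho_R)=S(\rho_S)-S(\rho'_S)+I(S':R')=\Delta S+I(S':R').
\end{equation*}
Substituting this into the previous display yields the claimed equality.

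The inequality then follows from the non-negativity of mutual information (subadditivity) and of relative entropy (Klein's inequality). The main point needing care is not algebraic but technical. In infinite dimensions, for instance a bosonic field in a squeezed thermal state, one needs $\hat H_{\mathrm{eff}}$ to have a trace-class Gibbs operator, which it inherits from $\hat H_R$ by unitary equivalence. One also needs the relevant entropies and energies to be finite, so that splitting $\operatorname{tr}[\rho'_R\ln\rho_R]$ into an energy term and $\ln Z$ is legitimate. I would state these finiteness conditions as standing assumptions, exactly as in the standard Landauer setting, since the unitary $\hat O$ preserves all the relevant spectral data.
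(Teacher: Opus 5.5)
Your proposal is correct and follows essentially the same route as the paper's proof. You rewrite $\rho_R$ as the Gibbs state $\mathrm e^{-\beta\hat H_{\mathrm{eff}}}/Z$ of $\hat H_{\mathrm{eff}}$, use unitarity and the product initial state to get $S(\rho'_R)-S(\rho_R)=\Delta S+I(S':R')$, and then expand $\operatorname{tr}[\rho'_R\ln\rho_R]$ to extract $D(\rho'_R\|\rho_R)$. You also correctly keep the factor $\beta$ in $\beta\operatorname{tr}[\hat H_{\mathrm{eff}}(\rho'_R-\rho_R)]$, which the paper's proof has but the statement's definition of $\beta\Delta\hat H_{\mathrm{eff}}$ omits as typeset.
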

To obtain the theorem, we need to utilize the unitarity of time evolution (the total von Neumann entropy of the system remains constant), the product form of the initial state (the initial von Neumann entropy equals the sum of the entropies of $S$ and $R$), and
$
\hat O\, \mathrm{e}^{-\beta H_R}\, \hat O^\dagger = \mathrm{e}^{-\beta \hat O H_R \hat O^\dagger} = \mathrm{e}^{-\beta H_{\mathrm{eff}}}.
$ See \cite{Supplemental} for a detailed proof.

We now apply Theorem 1 to analyze the entropy production in the interaction between an Unruh-DeWitt detector and a QFT prepared initially in a STS. The free Hamiltonians of the detector and the field are respectively
$\hat{H}^{(d)}_0 = \frac{\Omega}{2}\,\sigma_z ,
\hat{H}^{(f)}_0 = \sum_{j=1}^{\infty} \omega_j \hat{a}_j^\dagger \hat{a}_j$ \cite{footnote}. The interaction Hamiltonian
$
\hat{H}_{\text{int}} = \lambda\,\hat{\sigma}_x(\tau)\,\hat{\phi}[x(\tau)] 
\;\propto\; (\hat{\sigma}_{-} + \hat{\sigma}_{+})(\hat{a} + \hat{a}^{\dagger})
$,
where \(\lambda\) is the coupling strength, $\hat{\sigma}_{+}$ and $\hat{\sigma}_{-}$ are the detector’s raising and lowering operators, and \(\hat{a}, \hat{a}^{\dagger}\) denote the field's annihilation and creation operators. The $\hat{\sigma}_x$ coupling in the interaction Hamiltonian enables energy exchange of the detector and the field. In contrast, the model with $\hat{\sigma}_z$ coupling, while exactly solvable, only induces decoherence in the detector's density matrix without enabling net heat transfer. It therefore cannot describe thermalization or genuine energy-flow processes in the detector. For this reason, in this letter we analyze the $\hat{\sigma}_x$ model using a perturbative approach.

The interaction Hamiltonian corresponds to two kinds of physical process:
\begin{itemize}
    \item The \textit{rotating wave terms} $\hat{\sigma}_{+}\hat{a} + \hat{\sigma}_{-} \hat{a}^{\dagger}$ describe energy-conserving processes such as stimulated emission and absorption. In the rotating-wave approximation, it reduces to the Jaynes--Cummings model~\cite{Jaynes1963}.
    \item The \textit{counter-rotating wave terms} $\hat{\sigma}_{+} \hat{a}^{\dagger} + \hat{\sigma}_{-}\hat{a}$ account for non-energy-conserving processes, which are essential for manifestations of the Unruh effect and related phenomena.
\end{itemize}

In the interaction picture, the detector operator evolves as $\hat{\sigma}_x(\tau) = \hat{\sigma}_{+} e^{i\Omega \tau} + \hat{\sigma}_{-} e^{-i\Omega \tau}$, while the field operator takes the form
\begin{align}
\hat{\phi}[x(\tau)] = \sum_{j=1}^{\infty} \left( \hat{a}_j e^{-i\omega_j t(\tau)} u_j[x(\tau)] + \hat{a}^{\dagger}_j e^{i\omega_j t(\tau)} u_j^*[x(\tau)] \right),
\label{int}
\end{align}
where the mode $u_j[x(\tau)]$ depends on the boundary conditions. From $\tau=0$ to $\tau=T$, the time evolution operator is given by the Dyson series:
\begin{align*}
\hat{U}(T,0) = &\; 1
\underbrace{ - i \int^{T}_{0} d\tau \, \hat{H}_{\text{int}}(\tau) }_{\hat{U}^{(1)}} \nonumber\\
&\underbrace{ + (-i)^2 \int^{T}_{0} d\tau \int^{\tau}_{0} d\tau' \, \hat{H}_{\text{int}}(\tau) \hat{H}_{\text{int}}(\tau') }_{\hat{U}^{(2)}} + \dots 
\label{dyson}
\end{align*}
The density matrix at time $\tau = T$ can be expanded order by order as
\begin{equation}
\rho_{T} = \rho^{(0)}_{T} + \rho^{(1)}_{T} + \rho^{(2)}_{T} + \mathcal{O}(\lambda^3),
\end{equation}
with
\begin{align}
\rho^{(0)}_{T} &= \rho_0, \nonumber\\
\rho^{(1)}_{T} &= \hat{U}^{(1)} \rho_0 + \rho_0 \hat{U}^{(1)\dagger},\nonumber \\
\rho^{(2)}_{T} &= \hat{U}^{(1)} \rho_0 \hat{U}^{(1)\dagger} + \hat{U}^{(2)} \rho_0 + \rho_0 \hat{U}^{(2)\dagger}.
\label{rho}
\end{align}

We choose the inital state of the total system $\rho_{0}= \rho^{\text{(0)}}_{d}  \otimes \rho^{\text{(0)}}_{f}$, where the state of the qubit is
\begin{equation}
\rho^{\text{(0)}}_d=\begin{pmatrix}
    p & 0 \\

    0 & 1-p
\end{pmatrix},
\label{dq}
\end{equation}
and $0< p< 1$ is a real number, while the state of the QFT is
\begin{equation}
\rho^{\text{(0)}}_{f}= \bigotimes_{j=1}^{\infty}\hat{S}_j \rho^{\text{th}}_{j}\hat{S}_j^{\dagger}.
\end{equation}
The $\rho^{\text{th}}_{j}$ and $\hat{S}_j$ are the density matrix of thermal state and squeezing operator in the modes $k_j$ respectively, satisfing 
\begin{align}
\text{Tr}\left(\hat{a}_j^{\dagger}\hat{a}_j\rho^{\text{th}}_{j}\right)=\frac{1}{e^{\beta \omega_j}-1}:=\bar{n}_j,
\end{align}
and
\begin{align}
\hat{S}_j=\exp\left[ \frac{1}{2} \xi_{j}^{*} \hat{a}_{j}^{2} - \frac{1}{2} \xi_{j} (\hat{a}_{j}^{\dagger})^{2} \right], \quad \xi_{j} = r_j e^{i\theta_j} \; (r_j \geqslant 0),
\end{align}
where the $r_j$ and $\theta_j$ indicate the squeezing strength and phase, respectively. Defining $\langle \hat{A}\rangle:=\text{Tr}\left[\hat{A}\rho^{\text{(0)}}_{f}\right]$ and using 
\begin{align}
\hat{S}_{j}^{\dagger}(r_{j})\hat{a}_{j}\hat{S}_{j}(r_{j})&=\hat{a}_{j}\cosh r_{j}-\hat{a}_{j}^{\dagger}e^{i\theta_{j}}\sinh r_{j},\nonumber \\
\hat{S}_{j}^{\dagger}(r_{j})\hat{a}_{j}^{\dagger}\hat{S}_{j}(r_{j})&=\hat{a}_{j}^{\dagger}\cosh r_{j}-\hat{a}_{j}e^{-i\theta_{j}}\sinh r_{j},
\end{align}
we have $\langle \hat{a}_{j}\rangle =\langle \hat{a}_{j}^{\dagger}\rangle=0$, while
\begin{align}
    \langle \left( \hat{a}_{j} \right)^{2}\rangle &= \langle ( \hat{a}_{j}^{\dagger} )^{2} \rangle^{*} = -\frac{1}{2} \left( 2\bar{n}_{j} + 1 \right) \sinh\left( 2r_{j} \right) e^{i\theta_{j}},\nonumber \\
    \langle \hat{a}_{j}^{\dagger} \hat{a}_{j} \rangle &=\langle \hat{a}_{j} \hat{a}_{j}^{\dagger} \rangle - 1= \bar{n}_{j} \cosh\left( 2r_{j} \right) + \sinh^{2} r_{j}.
\end{align}

In the order of $\lambda$, by directly applying the $\hat{U}^{(1)}$ to the initial density matrix $\rho_{0}$ and then performing the partial trace over the detector and the QFT, we will find that $\text{Tr}\left(\sigma_{\pm}\rho^{\text{(0)}}_d\right)=0$ and $\langle \phi(\tau)\rangle=0$, which implies that the reduced density matrix of the QFT and detector are both zero.

For the order of $\lambda^2$, defining $\Delta \tau=\tau_1-\tau_2$, we have
\begin{widetext}
\begin{align}
\hat{U}^{(1)}\rho_0 \hat{U}^{(1)\dagger}&=\lambda^2\iint d\tau_1 d\tau_2 \begin{pmatrix}
    (1-p)e^{-i\Omega\Delta \tau} & 0 \\
    0 & pe^{i\Omega\Delta \tau}
\end{pmatrix} \hat{\phi}(\tau_2)\rho^{\text{(0)}}_{f}\hat{\phi}(\tau_1), \nonumber\\
\hat{U}^{(2)}\rho_0 & =- \theta(\Delta \tau) \lambda^2 \iint d\tau_1 d\tau_2 \begin{pmatrix}
    pe^{i\Omega\Delta \tau} & 0 \\
    0 & (1-p)e^{-i\Omega\Delta \tau}
\end{pmatrix} \hat{\phi}(\tau_1) \hat{\phi}(\tau_2) \rho^{\text{(0)}}_{f},\nonumber \\
\rho_0 \hat{U}^{(2)\dagger} &=- \theta(-\Delta \tau) \lambda^2 \iint d\tau_1 d\tau_2 \begin{pmatrix}
    pe^{i\Omega\Delta \tau} & 0 \\
    0 & (1-p)e^{-i\Omega\Delta \tau}
\end{pmatrix} \rho^{\text{(0)}}_{f} \hat{\phi}(\tau_1) \hat{\phi}(\tau_2).
\end{align}
\end{widetext}
By taking the partial trace over the QFT sector in the three expressions above and utilizing the identity $\theta(\Delta \tau) + \theta(-\Delta \tau) = 1$ for the Heaviside step function, we can derive the reduced density matrix for the detector subsystem, expressed as
$
\begin{pmatrix}
    \delta p & 0 \\

    0 & -\delta p
\end{pmatrix},
$
where
\begin{equation}
\delta p= \lambda^2 \iint d \tau_1 d\tau_2 \left( (1-p)e^{-i\Omega\Delta \tau}-pe^{i\Omega\Delta \tau}\right) \langle\hat{\phi}(\tau_1) \hat{\phi}(\tau_2)\rangle .
\label{p}
\end{equation}
The general form of the two-point function of STS can be expressed as
\begin{align}
&\langle\hat{\phi}(\tau_1) \hat{\phi}(\tau_2)\rangle = \sum_{j=1}\big[ \langle \hat{a}_j \hat{a}^{\dagger}_j \rangle e^{-i\omega_j(t(\tau_1)-t(\tau_2))} u_j[x(\tau_1)]u^*_j[x(\tau_2)] \nonumber \\
&+
\langle \hat{a}^{\dagger}_j \hat{a}_j \rangle e^{i\omega_j(t(\tau_1)-t(\tau_2))} u^*_j[x(\tau_1)]u_j[x(\tau_2)]\nonumber \\
&+ 2\operatorname{Re} \left(\langle (\hat{a}_j^\dagger)^2 \rangle e^{i\omega_j(t(\tau_1)+t(\tau_2))} u^*_j[x(\tau_1)]u^*_j[x(\tau_2)]\right) \big].
\label{two}
\end{align}
By substituting the above formula into the $\delta p$, we obtain its specific expression
\begin{align}
\delta p= \lambda^2 \sum_{j}&\bigg[ \left( (1-p) \langle \hat{a}_j \hat{a}_j^{\dagger} \rangle - p \langle \hat{a}_j^\dagger \hat{a}_j \rangle \right)|I_{+j}|^2 \nonumber \\
&+ \left( (1-p) \langle \hat{a}_j^\dagger \hat{a}_j \rangle - p \langle \hat{a}_j \hat{a}_j^\dagger \rangle \right) |I_{-j}|^2 \nonumber\\ 
&+ 2(1-2p) \operatorname{Re} \left( \langle (\hat{a}_j^\dagger)^2 \rangle I_{+j} I_{-j} \right)\bigg],
\label{p2}
\end{align}
where
\begin{equation}
I_{\pm,j}:=\int^{s}_0 d\tau~e^{i\left[\pm \Omega \tau+\omega_jt(\tau)\right]}u_j^*[x(\tau)].
\label{I}
\end{equation}
The $I_{-,j}^{(*)}$ and $I_{+,j}^{(*)}$ correspond to the rotating and counter-rotating wave terms, respectively. For the inertial motion, only if the energy gap of the qubit matches the relativistically Doppler-shifted energy of the field, the $I_{-,j}$ contributes, while $I_{+,j}$ is identically zero in all cases. In the non-inertial motion, however, the above conclusion would no longer hold. For example, in the case of Unruh effect, we have
$
|I_{+,j}|^2 \propto 1/(e^{2\pi \Omega/a}-1),
$
and $|I_{-,j}|^2 = e^{\frac{2\pi \Omega}{a}}|I_{+,j}|^2$, where $a$ is the acceleration of the detector. Moreover, $I_{\pm,j}$ incorporates the dispersion relation of the QFT, allowing it to compute the response of a detector undergoing arbitrary motion and energy spectra, even in the case of Lorentz-violating theories \cite{Hammad:2021rey,Xu:2025smb}. In fact, a current strategy for detecting the Unruh effect involves adjusting the particle's trajectory and the properties of the thermal reservoir to amplify the terms relating $I_{+,j}^{(*)}$, thereby making its signature more pronounced \cite{Kempf2022}.

The entropy change in the Landauer's principle
\begin{equation}
\Delta S = -\sum_{j=1}^{\infty}\ln{\frac{1-p}{p}}\delta p.
\end{equation}
We now provide a brief discussion of the two-point function. The expectation value of terms linear in creation and annihilation operators vanishes for a STS, in contrast to other states such as a coherent state \cite{Simidzija:2017jpo,Simidzija:2017kty}. When considering the contribution of second-order terms to the detector, the quantity $\delta p$ contains terms proportional to $|I_{-,j}|^2$ and $|I_{+,j}|^2$
, which, as noted earlier, correspond to the rotating wave and counter-rotating wave contributions, respectively. Notably, these terms depend only on the proper interval along the detector’s trajectory. \emph{In contrast, the cross term $I_{+j} I_{-j}$ depends not only on the proper time separation but also on the specific spacetime coordinates of the trajectory.} 

In the case of the vacuum state $|0\rangle$, which possesses maximal symmetry, the vacuum expectation value of the two-point function must be a Lorentz-invariant function—namely, a function of the proper interval between the two points. For non-vacuum states, however, the breaking of symmetries and the resulting complexity in functional dependence mean that this simplification no longer holds. The quantum field background may not be homogeneous in spacetime, nor is the state necessarily an eigenstate of the Hamiltonian. Consequently, the two-point function can depend explicitly on the absolute positions of the points, indicating that correlations in the system are sensitive to the absolute clock time of the experiment. Such states possess an intrinsic ``starting time'' or coherence, and their time evolution breaks temporal translation symmetry of the correlation functions. The presence of off-diagonal elements in the density matrix of these states defines preferred spacetime points or directions, thereby explicitly breaking the overall translational symmetry. The unitary operator $\hat{O}$ introduced in Theorem 1---for instance, a squeezing operator---can potentially break the symmetry of the QFT. This effect manifests explicitly in the cross terms between $I_{+,j}^{(*)}$ and $I_{-,j}^{(*)}$.

For the expectation value of the $\hat{H}_{\text{eff}}$ in the QFT, we have
\begin{align}
\hat{H}_{\text{eff}} &= \sum_j^{\infty} \hat{S}_j \, \omega_j \, \hat{a}_j^\dagger \hat{a}_j \, \hat{S}_j^\dagger \nonumber \\ 
&= \sum_j^{\infty} \omega_j \bigg( \cosh (2r_j) \, \hat{a}_j^\dagger \hat{a}_j 
+ \frac{1}{2} \sinh (2r_j) \big[ e^{i\theta_j} (\hat{a}_j^\dagger)^2 \nonumber \\ 
&+ e^{-i\theta_j} (\hat{a}_j)^2 \big] 
+ \sinh^2 r_j \bigg).
\end{align}
It is worth noting that the \( \hat{H}_{\text{eff}} \) is a mathematical construct designed to replicate the original system's dynamics within a specific subspace via a projection; its eigenvalues do not correspond to observable physical energy. Its validity arises from our precise treatment of the system's symmetries, not from directly modeling real energy. Evaluating the expectation value of $\hat{H}_{\text{eff}}$ with the reduced density matrix of the QFT at $\lambda^2$ order and then subtracting $\Delta S$, we finally obtain \cite{Supplemental}
\begin{align}
    \beta\Delta \hat{H}_{\text{eff}}-\Delta S &=\lambda^2 \sum_{j}^{\infty}\left[A_{j}\left|I_{-,j}-\frac{C_{j}}{2A_{j}}I_{+,j}^{*}e^{i\theta_{j}}\right|^{2} \right. \nonumber \\
    &\quad \left. + \left(B_{j}-\frac{C_{j}^{2}}{4A_{j}}\right)\left|I_{+,j}\right|^{2}\right].
\label{ep}
\end{align}
The coefficients are respectively
\begin{align}
    A_{j} &= \sinh^{2} r_{j} \left[ A_{j}^{\text{(min)}}+B_{j}^{\text{(min)}} \right] + A_{j}^{\text{(min)}}, \nonumber\\
    B_{j} &= \sinh^{2} r_{j} \left[ A_{j}^{\text{(min)}}+B_{j}^{\text{(min)}} \right] + B_{j}^{\text{(min)}}, \nonumber\\
    C_{j} &= 2 \sqrt{ \sinh^{2} r_{j}(1 + \sinh^{2} r_{j}) } \left[ A_{j}^{\text{(min)}}+B_{j}^{\text{(min)}} \right],
\end{align}
where
\begin{align}
    A_{j}^{\text{(min)}} &= [(\bar{n}_j+1)p-\bar{n}_j(1-p)] \left[ \beta \omega_{j} - \ln \frac{1-p}{p} \right]\geq 0, \nonumber\\
    B_{j}^{\text{(min)}} &= [(\bar{n}_j+1)(1-p)-\bar{n}_jp] \left[ \beta \omega_{j} + \ln \frac{1-p}{p} \right]\geq 0,
\end{align}
correspond to the entropy production for the canonical thermal state \cite{Xu:2021buk,Xu:2024ztq}. We have 
\begin{align}
   4 A_{j} B_{j} -C_{j} ^2= 4A_{j}^{\text{(min)}}B_{j}^{\text{(min)}}\geq 0.
\end{align}
Thus, we can conclude that the entropy production \eqref{ep} for the STS is always non-negative.

In conclusion, we have rigorously extended Landauer's principle to encompass a broad class of engineered thermal reservoirs. By introducing the effective Hamiltonian, we have established a universally valid inequality that guarantees the non-negativity of entropy production. This framework elegantly resolves the previously observed violations, recasting them not as failures of thermodynamic laws but as consequences of applying the standard principle beyond its assumed conditions. While the result appears strikingly simple, its strength lies precisely in this simplicity, which stems from a minimal and elegant generalization of the four core assumptions. It may well play a beneficial role in future studies.

The application of this framework to the Unruh-DeWitt detector model coupled to a STS vividly illustrates its utility. The perturbative calculation confirms that entropy production is indeed non-negative, even for detectors on arbitrary trajectories in a non-equilibrium field state. Our choice of the model is motivated precisely \emph{by its universality, not its specificity}. While in our example the field is in a squeezed thermal state, the two-point function \eqref{two} is in fact a general result (when the one-point function vanishes). This structure necessarily forces the entropy change \( \Delta S \) to decompose into a sum of modulus-square terms from \( I_{+,j} \), \( I_{-,j} \), and their cross-terms. Theorem 1 then requires \( \Delta H_{\text{eff}} \) to contain matching terms to satisfy the inequality. Consequently, our final result for the entropy production \eqref{ep} is a general consequence of this structure. It will hold for other unitary operators \( \hat{O} \), with only the specific forms of \( A_j, B_j, C_j \) changing. Thus, \emph{our aim is not to present a special case, but to argue for a general pattern arising from symmetry considerations in the interaction.}

Given the model's generality, one can naturally examine its physical implications. Since the Unruh effect remains unobserved experimentally, a key theoretical challenge is to amplify the detectable signal—namely, the counter-rotating wave contribution. From \eqref{p2}, achieving a stronger signal requires enhancing the$|I_{+j}|^2$ term while suppressing $|I_{-j}|^2$. The magnitude $I_{\pm j}$ is largely set by the detector’s trajectory; a suitably chosen trajectory can make $I_{-j}$ vanish (see \cite{Kempf2022}). In that case, and for $p \to 0$, $\delta p$ becomes proportional to $\langle \hat{a}_j \hat{a}_j^{\dagger} \rangle$. Using $\langle \hat{a}_{j} \hat{a}_{j}^{\dagger} \rangle = \bar{n}_{j} \cosh\left( 2r_{j} \right) + \sinh^{2} r_{j}+1$, we see that increasing the squeezing parameter $r_j$ significantly boosts the counter-rotating wave contribution, thereby raising the transition rate in the Unruh-DeWitt detector. This points to a promising application of our framework.

Beyond this specific example, this work bridges quantum information theory, thermodynamics, and quantum field theory, offering a precise tool to quantify irreversibility in complex scenarios. It paves the way for future investigations into quantum thermal machines fueled by engineered reservoirs, energy transfer in curved spacetime, and the fundamental limits of information processing in relativistic quantum systems.

\begin{acknowledgments}
Hao Xu thanks National Natural Science Foundation of China (No.12205250) for funding support. He also thanks the anonymous reviewers for their constructive comments on this work.
\end{acknowledgments}

\clearpage
\onecolumngrid
\setcounter{page}{1}
\begin{appendix}

\begin{center}
\textbf{\large SUPPLEMENTAL MATERIAL}
\end{center}
\section{Proof of Theorem 1}
\begin{proof}
Using $\rho_{SR}=\rho_S\otimes\rho_R$, the unitarity of $\hat{U}$, and the identity
$\hat O\mathrm e^{-\beta H_R}\hat O^\dagger=\mathrm e^{-\beta \hat O H_R \hat O^\dagger}=\mathrm e^{-\beta H_{\mathrm{eff}}}$, we have
\begin{align*}
\Delta S+I(S':R')
&=S(\rho_S)-S(\rho'_S)+I(S':R')\\
&=S(\rho'_R)-S(\rho_R) \\
&=-\operatorname{tr}\bigl[\rho'_R\ln\rho'_R\bigr]
   +\operatorname{tr}\!\Bigl[\rho_R\ln\rho_R\Bigr] \\
&=-\operatorname{tr}\bigl[\rho'_R\ln\rho'_R\bigr]
   +\operatorname{tr}\!\Bigl[\rho_R\ln\!\Bigl(\hat O
     \frac{\mathrm e^{-\beta \hat{H}_R}}{\operatorname{tr}[\mathrm e^{-\beta \hat{H}_R}]}\hat O^\dagger\Bigr)\Bigr] \\
&=-\operatorname{tr}\bigl[\rho'_R\ln\rho'_R\bigr]
   +\operatorname{tr}\!\Bigl[\rho_R\ln\!\Bigl(
     \frac{\mathrm e^{-\beta \hat{H}_{\mathrm{eff}}}}{\operatorname{tr}[\mathrm e^{-\beta \hat{H}_R}]}\Bigr)\Bigr] \\
&=-\operatorname{tr}\bigl[\rho'_R\ln\rho'_R\bigr]
   -\beta\operatorname{tr}\bigl[\hat{H}_{\mathrm{eff}}\rho_R\bigr]
   -\ln\!\bigl(\operatorname{tr}[\mathrm e^{-\beta \hat{H}_R}]\bigr)+\beta\operatorname{tr}\bigl[\hat{H}_{\mathrm{eff}}\rho'_R\bigr]
   -\beta\operatorname{tr}\bigl[\hat{H}_{\mathrm{eff}}\rho'_R\bigr] \\
&=\beta\operatorname{tr}\bigl[\hat{H}_{\mathrm{eff}}(\rho'_R-\rho_R)\bigr]-\operatorname{tr}\bigl[\rho'_R\ln\rho'_R\bigr]
   +\operatorname{tr}\!\Bigl[\rho'_R\ln\!\Bigl(
     \frac{\mathrm e^{-\beta \hat{H}_{\mathrm{eff}}}}{\operatorname{tr}[\mathrm e^{-\beta \hat{H}_R}]}\Bigr)\Bigr] \\
&=\beta\Delta \hat{H}_{\text{eff}}-D(\rho'_R\|\rho_R).
\end{align*}
The mutual information $I(S':R')$ and the relative entropy $D(\rho'_R\|\rho_R)$ are both non‑negative, 
hence $\beta\Delta \hat{H}_{\text{eff}}\geqslant\Delta S$ follows immediately.
\end{proof}
\section{Derivation of Eq. \eqref{ep}}
The expectation value of $\hat{A}$ on the reduced density matrix of the QFT is
\begin{equation}
\begin{aligned}
\text{Tr}(\hat{A}\rho^{(2)}_{f})=\iint d\tau_1 d\tau_2 \left[(1-p)e^{-i\Omega\Delta \tau} + p e^{i\Omega\Delta \tau}\right] &\left[\langle \phi(\tau_1) \hat{A} \phi(\tau_2) \rangle 
- \theta(\Delta\tau) \langle \hat{A} \phi(\tau_1) \phi(\tau_2) \rangle \right. \\
&\left. - \theta(-\Delta\tau) \langle \phi(\tau_1) \phi(\tau_2) \hat{A} \rangle \right].
\end{aligned}
\end{equation}
The effective Hamiltonian can be divided into three parts. First, the number operator $\hat{a}_j^\dagger \hat{a}_j$, which corresponds to the conventional Hamiltonian term. Second, the term $\left[ e^{i\theta_j} (\hat{a}_j^\dagger)^2 
+ e^{-i\theta_j} (\hat{a}_j)^2 \right] $. We only need to compute the expectation value of one of these two terms; the total result can then be expressed as twice the real part of that term. The third part is a constant, whose expectation value clearly vanishes. We first compute the expectation value of $\hat{a}_j^\dagger \hat{a}_j$, obtaining:
\begin{align}
\text{Tr}(\hat{a}^{\dagger}_j \hat{a}_j\rho^{(2)}_{f})&=\left[\left(p\langle \hat{a}_j \hat{a}^{\dagger}_j \rangle- (1-p)\langle \hat{a}^{\dagger}_j \hat{a}_j \rangle\right)|I_{-,j}|^2-\left(p\langle \hat{a}^{\dagger}_j \hat{a}_j \rangle -(1-p)\langle \hat{a}_j \hat{a}^{\dagger}_j \rangle\right)|I_{+,j}|^2\right] \nonumber \\
&+\varepsilon(\Delta \tau)(2-4p)\text{Re}\left( \langle (\hat{a}_j^{\dagger})^2 \rangle I_{+,j}(\tau_1) I_{-,j}(\tau_2)\right)
\label{E}
\end{align}
Then, we compute the expectation value of $(\hat{a}_j)^2$, which gives: 
\begin{align}
\text{Tr}((\hat{a}_j)^2\rho^{(2)}_{f})=&(1-p)\Big[  \theta(-\Delta\tau)\left\langle 2\hat{a}_{j}^{2} \right\rangle I_{+,j}^{*}(\tau_{1})I_{+,j}(\tau_{2}) + \theta(\Delta\tau)\left\langle -2\hat{a}_{j}^{2} \right\rangle I_{-,j}(\tau_{1})I_{-,j}^{*}(\tau_{2}) \nonumber\\
           & + \theta(\Delta\tau)\left\langle -2\hat{a}_{j}\hat{a}_{j}^{\dagger}  \right\rangle I_{-,j}(\tau_{1})I_{+,j}(\tau_{2}) + \theta(-\Delta\tau)\left\langle 2\hat{a}_{j}^{\dagger}\hat{a}_{j} \right\rangle I_{-,j}(\tau_{1})I_{+,j}(\tau_{2}) \Big]  \nonumber\\
&+ p\Big[  \theta(-\Delta\tau)\left\langle 2\hat{a}_{j}^{2} \right\rangle I_{-,j}^{*}(\tau_{1})I_{-,j}(\tau_{2}) + \theta(\Delta\tau)\left\langle -2\hat{a}_{j}^{2} \right\rangle I_{+,j}(\tau_{1})I_{+,j}^{*}(\tau_{2})  \nonumber\\
         & + \theta(\Delta\tau)\left\langle -2\hat{a}_{j}\hat{a}_{j}^{\dagger} \right\rangle I_{+,j}(\tau_{1})I_{-,j}(\tau_{2}) + \theta(-\Delta\tau)\left\langle 2\hat{a}_{j}^{\dagger}\hat{a}_{j} \right\rangle I_{+,j}(\tau_{1})I_{-,j}(\tau_{2}) \Big]
\end{align}
The real part of the integrand of the function $I_{\pm,j}(\tau_1) I_{\pm,j}^*(\tau_2)$ is an even function with respect to $\Delta \tau$. This implies that, in the context of integration, we have
\begin{equation}
\mathrm{Re}\left(\theta(\Delta \tau)I_{\pm,j}(\tau_1) I_{\pm,j}^*(\tau_2)\right) = \mathrm{Re}\left(\theta(-\Delta \tau)I_{\pm,j}(\tau_1) I_{\pm,j}^*(\tau_2)\right) = \frac{1}{2} |I_{\pm,j}(\tau)|^2 \,.
\end{equation}
Substituting the above results directly into the effective Hamiltonian and using 
\begin{align}
    \langle \left( \hat{a}_{j} \right)^{2}\rangle &= \langle ( \hat{a}_{j}^{\dagger} )^{2} \rangle^{*} = -\frac{1}{2} \left( 2\bar{n}_{j} + 1 \right) \sinh\left( 2r_{j} \right) e^{i\theta_{j}}, \\
    \langle \hat{a}_{j}^{\dagger} \hat{a}_{j} \rangle &=\langle \hat{a}_{j} \hat{a}_{j}^{\dagger} \rangle - 1= \bar{n}_{j} \cosh\left( 2r_{j} \right) + \sinh^{2} r_{j},
\end{align}
we have
\begin{equation}
\begin{aligned}
\Delta \hat{H}_{\text{eff}}=\sum_{j} \omega_{j} \Bigg\{ & \left[(1-p)\left(\sinh^{2}r_{j}-\bar{n}_{j}\right)+p\left(\bar{n}_{j}+1+\sinh^{2}r_{j}\right)\right]|I_{-,j}|^{2} \\
& + \left[(1-p)\left(\bar{n}_{j}+1+\sinh^{2}r_{j}\right)+p\left(\sinh^{2}r_{j}-\bar{n}_{j}\right)\right]|I_{+,j}|^{2} \\
& - \sinh(2r_{j}) \operatorname{Re}\left[e^{-i\theta} I_{+,j}(\tau_1) I_{-,j}(\tau_2)\right] \Bigg\}.
\end{aligned}
\end{equation}
Following a similar approach for computing $\Delta S$, we obtain
\begin{align}
    \beta\Delta \hat{H}_{\text{eff}}-\Delta S&=\sum_{j}\left[A_{j}\left|I_{-,j}\right|^{2}+B_{j}\left|I_{+,j}\right|^{2}-C_{j}\mathrm{Re}\left(I_{+,j}(\tau_{1})I_{-,j}^{*}(\tau_{2})e^{-i\theta_{j}}\right)\right] \nonumber \\
    &=\sum_{j}\left[A_{j}\left|I_{-,j}-\frac{C_{j}}{2A_{j}}I_{+,j}e^{i\theta_{j}}\right|^{2}+\left(B_{j}-\frac{C_{j}^{2}}{4A_{j}}\right)\left|I_{+,j}\right|^{2}\right],
\label{epb}
\end{align}
where the coefficients are respectively
\begin{align}
    A_{j} &= \sinh^{2} r_{j} \left[ \beta \omega_{j} +(1-2p)(2\bar{n}_{j}+1) \ln \frac{1-p}{p} \right] + [(\bar{n}_j+1)p-\bar{n}_j(1-p)] \left[ \beta \omega_{j} - \ln \frac{1-p}{p} \right] \nonumber\\
    B_{j} &= \sinh^{2} r_{j} \left[ \beta \omega_{j} +(1-2p)(2\bar{n}_{j}+1) \ln \frac{1-p}{p} \right] + [(\bar{n}_j+1)(1-p)-\bar{n}_jp] \left[ \beta \omega_{j} + \ln \frac{1-p}{p} \right] \nonumber\\
    C_{j} &= 2 \sqrt{ \sinh^{2} r_{j}(1 + \sinh^{2} r_{j}) } \left( \beta \omega_{j} +(1-2p)(2\bar{n}_{j}+1) \ln \frac{1-p}{p} \right)
\end{align}
Since $0<p<1$, $(1-2p) \ln \frac{1-p}{p}$ is always non-negative. Consequently, both $A_j$ and $B_j$ are increasing functions of $\sinh^{2} r_j$, and their minima are attained at $r_j = 0$. Therefore, we have
\begin{align}
    A_{j}^{\text{(min)}} &= [(\bar{n}_j+1)p-\bar{n}_j(1-p)] \left[ \beta \omega_{j} - \ln \frac{1-p}{p} \right] \nonumber\\
    B_{j}^{\text{(min)}} &= [(\bar{n}_j+1)(1-p)-\bar{n}_jp] \left[ \beta \omega_{j} + \ln \frac{1-p}{p} \right].
\end{align}
By substituting $\beta \omega_{j} = \ln \frac{\bar{n}_j+1}{\bar{n}_j}$, we obtain $A_{j}^{\text{(min)}} \geq 0$ and $B_{j}^{\text{(min)}} \geq 0$. This is the entropy production for the canonical thermal state. Moreover, we can also obtain
\begin{align}
 \beta \omega_{j} +(1-2p)(2\bar{n}_{j}+1) \ln \frac{1-p}{p} = A_{j}^{\text{(min)}}+ B_{j}^{\text{(min)}},
\end{align}
so the coefficients $A_j$, $B_j$, and $C_j$ can be expressed as
\begin{align}
    A_{j} &= \sinh^{2} r_{j} \left[ A_{j}^{\text{(min)}}+B_{j}^{\text{(min)}} \right] + A_{j}^{\text{(min)}}, \nonumber\\
    B_{j} &= \sinh^{2} r_{j} \left[ A_{j}^{\text{(min)}}+B_{j}^{\text{(min)}} \right] + B_{j}^{\text{(min)}}, \nonumber\\
    C_{j} &= 2 \sqrt{ \sinh^{2} r_{j}(1 + \sinh^{2} r_{j}) } \left[ A_{j}^{\text{(min)}}+B_{j}^{\text{(min)}} \right].
\end{align}
We have 
\begin{align}
   4 A_{j} B_{j} -C_{j} ^2= 4A_{j}^{\text{(min)}}B_{j}^{\text{(min)}}\geq 0.
\end{align}
Thus, we can conclude that the entropy production fo the STS \eqref{epb} is always non-negative.

\setcounter{equation}{0}
\setcounter{table}{0}

\end{appendix}

\end{document}